\title{Guarded Cubical Type Theory: \\
Path Equality for Guarded Recursion}
\author[1]{Lars Birkedal}
\author[1]{Ale\v{s} Bizjak}
\author[1]{Ranald Clouston}
\author[1]{Hans~Bugge~Grathwohl}
\author[1]{Bas~Spitters}
\author[2]{Andrea Vezzosi}
\affil[1]{Department of Computer Science, Aarhus University, Denmark}
\affil[2]{Department of Computer Science and Engineering, Chalmers University of Technology, Sweden}
\date{}
\theoremstyle{plain}
\newtheorem{theorem}{Theorem}
\newtheorem{lemma}[theorem]{Lemma}
\newtheorem{corollary}[theorem]{Corollary}
\theoremstyle{definition}
\newtheorem{example}[theorem]{Example}
\theoremstyle{remark}
\DeclareDocumentCommand{\later}{ o m }{
  \IfNoValueTF{#1}
  {\mathord{\triangleright}#2}
  {\mathord{\triangleright}#1 . #2}}
\DeclareDocumentCommand{\pure}{ o m }{
  \IfNoValueTF{#1}
  {\term{next}#2}
  {\term{next} #1 . \, #2}}
\DeclareDocumentCommand{\gstream}{ o m }
{\IfNoValueTF{#1}
  {\ensuremath{\term{Str}_{#2}}}
  {\ensuremath{\term{Str}^{#1}_{#2}}}}
\DeclareDocumentCommand{\eqjudg}{ m m m o }{
    \IfNoValueTF{#4}
    {\ensuremath{#1 \vdash #2 = #3}}
    {\ensuremath{#1 \vdash #2 = #3 : #4}}
}
\DeclareDocumentCommand{\dfix}{ o m m }{
    \IfNoValueTF{#1}
    {\dfixEmp #2 . #3}
    {\dfixEmp^{#1} #2 . #3}
}
\DeclareDocumentCommand{\fix}{ o m m }{
    \IfNoValueTF{#1}
    {\fixEmp #2 . #3}
    {\fixEmp^{#1} #2 . #3}
}
\newcommand{\DD}{\mathbb{D}}
\newcommand{\Face}{\ensuremath{\mathbbm{F}}}
\newcommand{\Id}{\term{Id}}
\newcommand{\Ineg}[1]{\ensuremath{1-#1}}
\newcommand{\I}{\ensuremath{\mathbbm{I}}}
\newcommand{\LCompTy}[2]{\Phi(#1;#2)}
\newcommand{\LElCompEmp}{\operatorname{Comp}}
\newcommand{\LElComp}[1]{\LElCompEmp(#1)}
\newcommand{\LElEmp}{\operatorname{El}}
\newcommand{\LEl}[1]{\LElEmp(#1)}
\newcommand{\LFace}{\mathbbm{F}}
\newcommand{\LGlueEmp}{\operatorname{Glue}}
\newcommand{\LGlue}[4]{\LGlueEmp \, \left[#1 \mapsto (#2,#3)\right] \, #4}
\newcommand{\LIdop}{\operatorname{Id}}
\newcommand{\LId}[3]{\LIdop_{#1} (#2, #3 )}
\newcommand{\LUfw}{\LU_{f}^\omega}
\newcommand{\LUf}{\LU_{f}}
\newcommand{\LUw}{\mathcal{U}^\omega}
\newcommand{\LU}{\mathcal{U}}
\newcommand{\Lfaceto}[2]{\ensuremath{#1^{#2}}}
\newcommand{\Lface}[1]{\left[#1\right]}
\newcommand{\Lhastype}[3]{\ensuremath{#1 \vdash #2 : #3}}
\newcommand{\Llater}{\operatorname{\triangleright}}
\newcommand{\Lwftype}[2]{\ensuremath{#1 \vdash #2}}
\newcommand{\Nat}{\term{N}}
\newcommand{\OneF}{\ensuremath{1_{\Face}}}
\newcommand{\Path}{\term{Path}}
\newcommand{\UG}{\mathfrak{U}}
\newcommand{\U}{\term{U}}
\newcommand{\Ycom}{\term{Y}}
\newcommand{\ZeroF}{\ensuremath{0_{\Face}}}
\newcommand{\abs}[2]{\langle #1 \rangle #2}
\newcommand{\app}{\circledast}
\newcommand{\bfc}{\mathbf{c}}
\newcommand{\bnfeq}{::=}
\newcommand{\code}[1]{\widehat{#1}}
\newcommand{\comp}[4]{\term{comp}^#1\,#2~#3~#4}
\newcommand{\ctt}{\ensuremath{\mathsf{CTT}}}
\newcommand{\cube}{\mathcal{C}}
\newcommand{\dM}[1]{\mathbf{DM}\left(#1\right)}
\newcommand{\defeq}{\triangleq}
\newcommand{\dfixEmp}{\term{dfix}}
\newcommand{\dsubst}[3]{\ensuremath{\vdash #1 : #2\rightarrowtriangle #3}}
\newcommand{\fixEmp}{\term{fix}}
\newcommand{\fold}{\term{fold}}
\newcommand{\gctt}{\ensuremath{\mathsf{GCTT}}}
\newcommand{\gdtt}{\ensuremath{\mathsf{GDTT}}}
\newcommand{\hastype}[3]{\ensuremath{#1 \vdash #2 : #3}}
\newcommand{\hd}{\term{hd}}
\newcommand{\hrt}[1]{\left[ #1 \right]}
\newcommand{\isetsep}{\;\ifnum\currentgrouptype=16 \middle\fi|\;}
\newcommand{\join}{\vee}
\newcommand{\latercode}{\code{\triangleright}}
\newcommand{\meet}{\wedge}
\newcommand{\natrec}{\term{natrec}}
\newcommand{\ott}{\ensuremath{\mathsf{OTT}}}
\newcommand{\pathlambda}[1]{\mathop{\ensuremath{\langle #1 \rangle}}}
\newcommand{\prev}{\operatorname{prev}}
\newcommand{\psh}[1]{\ensuremath{\widehat{#1}}}
\newcommand{\subst}[2]{[#1/#2]}
\newcommand{\term}[1]{\ensuremath{\operatorname{\mathsf{#1}}}}
\newcommand{\totcube}{{\widehat{\cube \times \omega}}}
\newcommand{\unfold}{\term{unfold}}
\newcommand{\vrt}[1]{\left[\begin{array}{l} #1 \end{array}\right]}
\newcommand{\wfcxt}[1]{\ensuremath{#1 \vdash}}
\newcommand{\wftype}[2]{\ensuremath{#1 \vdash #2}}
\newcommand{\zero}{\term{0}}
\renewcommand{\phi}{\varphi}
\renewcommand{\succ}{\term{s}}
\theoremstyle{plain}
\newtheorem{proposition}[theorem]{Proposition}
\newcommand{\appendixurl}{\url{http://cs.au.dk/~birke/papers/gdtt-cubical-technical-appendix.pdf}}
\newcommand{\gctturl}{\url{http://github.com/hansbugge/cubicaltt/tree/gcubical}}
\begin{document}

\maketitle

\begin{abstract}
This paper improves the treatment of equality in guarded dependent type theory ($\gdtt$), by combining it with cubical type theory ($\ctt$).
$\gdtt$ is an extensional type theory with guarded recursive types, which are useful for building models of program logics, and for programming and reasoning with coinductive types.
We wish to implement $\gdtt$ with decidable type checking, while still supporting non-trivial equality proofs that reason about the extensions of guarded recursive constructions.
$\ctt$ is a variation of Martin-L\"of type theory in which the identity type is replaced by abstract paths between terms.
$\ctt$ provides a computational interpretation of functional extensionality, is conjectured to have decidable type checking, and has an implemented type checker.
Our new type theory, called guarded cubical type theory, provides a computational interpretation of extensionality for guarded recursive types.
This further expands the foundations of $\ctt$ as a basis for formalisation in mathematics and computer science.
We present examples to demonstrate the expressivity of our type theory, all of which have been checked using a prototype type-checker implementation, and present semantics in a presheaf category.
\end{abstract}

\section{Introduction}
\label{sec:intro}

Guarded recursion is a technique for defining and reasoning about infinite objects. Its applications include the definition of
productive operations on data structures more commonly defined via coinduction, such as streams, and the construction
of models of program logics for modern programming languages with features such as higher-order store and
concurrency~\cite{Birkedal:Step}. This is done via the type-former $\later$, called `later', which distinguishes
data which is available immediately from data only available after some computation, such as the unfolding of a
fixed-point. For example, guarded recursive streams are defined by the equation
\[
  \gstream{A} \;=\; A \times \later{\gstream{A}}
\]
rather than the more standard $\gstream{A} = A \times\gstream{A}$, to specify that the head is available now but
the tail only later. The type for fixed-point combinators is then $(\later A\to A)\to A$, rather than the logically inconsistent
$(A\to A)\to A$, disallowing unproductive definitions such as taking the fixed-point of the identity function.

Guarded recursive types were developed in a simply-typed setting by Clouston et al.~\cite{Clouston:Programming}, following
earlier work~\cite{Nakano:Modality,Atkey:Productive,Abel:Formalized}, alongside a logic for reasoning about such programs.
For large examples such as models of program logics, we would like to be able to formalise such reasoning. A major
approach to formalisation is via \emph{dependent types}, used for example in the proof assistants Coq~\cite{Coq:manual} and Agda~\cite{Norell:thesis}.
Bizjak et al.~\cite{Bizjak-et-al:GDTT}, following earlier
work~\cite{Birkedal+:topos-of-trees,Mogelberg:tt-productive-coprogramming}, introduced guarded dependent type theory
($\gdtt$), integrating the $\later$ type-former into a dependently typed calculus, and supporting the definition of guarded
recursive types as fixed-points of functions on universes, and guarded recursive operations on these types.

We wish to formalise non-trivial theorems about equality between guarded recursive constructions, but such
arguments often cannot be accommodated within \emph{intensional} Martin-L\"{o}f type theory. For example, we may need
to be able to reason about the extensions of streams in order to prove the equality of different stream functions. Hence
$\gdtt$ includes an equality reflection rule, which is well known to make type checking undecidable.
This problem is close to well-known problems with functional
extensionality~\cite[Sec. 3.1.3]{Hofmann:Extensional}, and indeed this analogy can be developed. Just as
functional extensionality involves mapping terms of type $(x:A)\to\Id B\, (fx)\, (gx)$ to proofs of $\Id\,(A\to B)\, f\, g$, extensionality
for guarded recursion requires an extensionality principle for later types, namely the ability to map terms of type
$\later\Id A\,t\,u$ to proofs of
$\Id\,(\later A)\,(\pure{t})\,(\pure{u})$, where $\pure{}$ is the constructor for $\later$. These types are isomorphic in the intended
model, the presheaf category $\psh{\omega}$ known as the \emph{topos of trees}, and so in $\gdtt$ their equality was asserted as an axiom.
But in a calculus without equality reflection we cannot merely assert such axioms without losing canonicity.

\emph{Cubical type theory} ($\ctt$)~\cite{Cubical} is a new type theory with a computational interpretation of functional
extensionality but without equality reflection, and hence is a candidate for extension with guarded recursion, so that we may
formalise our arguments without incurring the disadvantages of fully extensional identity
types. $\ctt$ was developed primarily to provide a computational interpretation of the
univalence axiom of Homotopy Type Theory~\cite{hottbook}. The most important novelty of $\ctt$ is the replacement of inductively defined identity types by \emph{paths}, which can be seen as maps from an abstract interval $\I$, and are introduced and eliminated much like functions. $\ctt$ can be extended with identity types which model all rules of standard Martin-L\"of type
theory~\cite[Sec. 9.1]{Cubical}, but these are equivalent to path types, and in our paper it suffices to work with
path types only.
$\ctt$ has sound denotational semantics in (fibrations in) \emph{cubical
  sets}, a presheaf category that is used to model homotopy types. Many basic
syntactic properties of $\ctt$, such as the decidability of type checking, and canonicity
for base types, are yet to be proved, but a type checker has been implemented%
\footnote{\url{https://github.com/mortberg/cubicaltt}} that confers some confidence in
such properties.

In Sec.~\ref{sec:type-theory-examples} of this paper we propose \emph{guarded cubical type theory} ($\gctt$), a
combination of the two type theories%
\footnote{with the exception of the \emph{clock quantification} of $\gdtt$, which we leave to future work.}
which supports non-trivial proofs about guarded recursive types via path equality, while retaining the potential for good
syntactic properties such as decidable type-checking and canonicity. In particular, just as a term can be defined in $\ctt$ to
witness functional extensionality, a term can be defined in $\gctt$ to witness extensionality for later types.
Further, we use elements of the interval of $\ctt$ to annotate fixed-points, and hence control their unfoldings.
This ensures that fixed-points are path equal, but not judgementally equal, to their unfoldings, and hence prevents
infinite unfoldings, an obvious source of non-termination in any calculus with infinite constructions.
The resulting calculus is shown via examples to be useful for reasoning about guarded recursive operations;
we also view it as potentially significant from the point of view of $\ctt$, extending its expressivity as a basis for formalisation.

In Sec.~\ref{sec:semantics} we give sound semantics to
this type theory via the presheaf category over the product of the categories used to define semantics for $\gdtt$ and $\ctt$.
This requires considerable work to ensure that the constructions of the two type theories remain sound in the new category,
particularly the glueing and universe of $\ctt$.
The key technical challenge is to ensure that the $\later$ type-former supports the \emph{compositions} that all types
must carry in the semantics of $\ctt$.

We have implemented a prototype type-checker for this extended type theory%
\footnote{\gctturl},
which provides confidence in the type theory's syntactic properties. All examples in this paper, and many others,
have been formalised in this type checker.

For reasons of space many details and proofs are omitted from this paper, but are included in a technical appendix\footnote{\appendixurl}.

\section{Guarded Cubical Type Theory}
\label{sec:type-theory-examples}

This section introduces guarded cubical type theory ($\gctt$), and presents examples of how it can be used to prove
properties of guarded recursive constructions.

\subsection{Cubical Type Theory}

We first give a brief overview of \emph{cubical type theory}%
\footnote{\url{http://www.cse.chalmers.se/~coquand/selfcontained.pdf} is a self-contained presentation of $\ctt$.} ($\ctt$)~\cite{Cubical}.
We start with a standard dependent type theory with $\Pi$, $\Sigma$, natural numbers, and a Russell-style universe:
\[
  \begin{array}{lcl@{\hspace{.2\linewidth}}l}
    \Gamma, \Delta & \bnfeq & () ~ | ~ \Gamma, x : A & \text{Contexts} \\[1ex]
    t,u,A,B  & \bnfeq & x  ~|~ \lambda x : A . t ~|~ t\, u ~|~ (x : A) \to B &\text{$\Pi$-types} \\
    & | & (t,u) ~|~ t.1 ~|~ t.2 ~|~ (x:A) \times B &\text{$\Sigma$-types} \\
    & | & \zero ~|~ \succ t ~|~ \natrec t \, u ~|~ \Nat & \text{Natural numbers} \\
    & | & \U & \text{Universe}
  \end{array}
\]
We adhere to the usual conventions of considering terms and types up to $\alpha$-equality, and writing $A \to B$, respectively
$A \times B$, for non-dependent $\Pi$ and $\Sigma$-types. We use the symbol `$=$' for judgemental equality.

The central novelty of $\ctt$ is its treatment of equality.
Instead of the inductively defined identity types of intensional Martin-L\"of type theory~\cite{Martin-Lof-1973}, $\ctt$ has
\emph{paths}.
The paths between two terms $t,u$ of type $A$ form a sort of function space, intuitively that of continuous maps from some interval $\I$ to $A$, with endpoints $t$ and $u$.
Rather than defining the interval $\I$ concretely as the unit interval $[0,1] \subseteq
\mathbb{R}$, it is defined as the \emph{free De Morgan algebra}
on a discrete infinite set of names $\{i, j, k, \dots \}$. A De Morgan algebra is a bounded distributive lattice with an involution $\Ineg{\cdot}$ satisfying the De Morgan laws
\begin{align*}
  \Ineg{(i\meet j)} &= (\Ineg{i}) \join (\Ineg{j}), & \Ineg{(i\join j)} &= (\Ineg{i}) \meet (\Ineg{j}).
\end{align*}
The interval $[0,1]  \subseteq \mathbb{R}$, with $\term{min}$, $\term{max}$ and $\Ineg{\cdot}$, is an example of a De Morgan algebra.

The syntax for elements of $\I$ is:
\[
  r, s ~ \bnfeq ~ 0 ~|~ 1 ~|~ i ~|~ \Ineg{r} ~|~ r \meet s ~|~ r \join s.
\]
$0$ and $1$ represent the endpoints of the interval.
We extend the definition of contexts to allow introduction of a new name:
\[
  \Gamma, \Delta ~\bnfeq~ \cdots ~|~ \Gamma, i:\I.
\]
The judgement $\hastype{\Gamma}{r}{\I}$ means that $r$ draws its names from $\Gamma$.
Despite this notation, $\I$ is not a first-class type.
Path types and their elements are defined by
the rules in Fig.~\ref{fig:path-typing-rules}.
\emph{Path abstraction}, $\pathlambda{i} t$, and \emph{path application}, $t \, r$, are analogous to $\lambda$-abstraction and function application, and support the familiar $\beta$-equality $(\pathlambda{i} t)\, r = t\subst{r}{i}$ and $\eta$-equality $\pathlambda{i} t\, i = t$.
There are two additional judgemental equalities for paths, regarding their endpoints: given $p : \Path A ~ t ~ u$ we have $p\, 0 = t$ and $p\, 1 = u$.

\begin{figure}[t]
  \begin{mathpar}
    \inferrule{%
      \wftype{\Gamma}{A} \\
      \hastype{\Gamma}{t}{A} \\
      \hastype{\Gamma}{u}{A}
    }{%
      \wftype{\Gamma}{\Path A ~ t ~ u}}
  \end{mathpar}
  \begin{mathpar}
    \inferrule{%
      \wftype{\Gamma}{A} \\
      \hastype{\Gamma, i:\I}{t}{A}
    }{%
      \hastype{\Gamma}{\pathlambda{i} t}{\Path A ~t\subst{0}{i} ~ t\subst{1}{i}}}
    \and
    \inferrule{%
      \hastype{\Gamma}{t}{\Path A ~ u ~ s} \\
      \hastype{\Gamma}{r}{\I}
    }{%
      \hastype{\Gamma}{t\, r}{A}}
  \end{mathpar}
  \caption{Typing rules for path types.}
  \label{fig:path-typing-rules}
\end{figure}

Paths provide a notion of identity which is more extensional than
that of intensional Martin-L\"of identity types, as exemplified by the proof term for functional extensionality:\label{funext}
\[
  \term{funext} \, f\, g \defeq \lambda p . \pathlambda{i} \lambda x .\, p \, x \, i
  ~ : ~
  \left((x : A) \to \Path B ~ (f\, x) ~ (g\, x)\right) \to \Path~ (A\to B) ~ f ~ g.
\]

The rules above suffice to ensure that path equality is reflexive, symmetric, and a congruence, but we also need it to be
transitive and, where the underlying type is the universe, to support a notion of transport.
This is done via \emph{(Kan) composition operations}.

To define these we need the \emph{face lattice}, $\Face$, defined as the free distributive lattice on the symbols $(i=0)$ and $(i=1)$ for all names $i$, quotiented by the relation $(i=0) \meet (i=1) = \ZeroF$.
The syntax for elements of $\Face$ is:
\[
  \phi,\psi ~\bnfeq~
  \ZeroF ~|~ \OneF ~|~ (i=0) ~|~ (i=1) ~|~ \phi \meet \psi ~|~ \phi \join \psi.
\]
As with the interval, $\Face$ is not a first-class type, but the judgement $\hastype{\Gamma}{\phi}{\Face}$ asserts
that $\phi$ draws its names from $\Gamma$.
We also have the judgement $\eqjudg{\Gamma}{\phi}{\psi}[\Face]$ which asserts the equality of $\phi$ and $\psi$ in the
face lattice. Contexts can be restricted by elements of $\Face$:
\[
  \Gamma, \Delta ~\bnfeq~ \cdots ~|~ \Gamma,\phi.
\]
Such a restriction affects equality judgements so that, for example, $\eqjudg{\Gamma,\phi}{\psi_1}{\psi_2}[\Face]$ is
equivalent to $\eqjudg{\Gamma}{\phi \meet \psi_1}{\phi \meet \psi_2}[\Face]$

We write $\hastype{\Gamma}{t}{A[\phi \mapsto u]}$ as an abbreviation for the two judgements $\hastype{\Gamma}{t}{A}$
and $\eqjudg{\Gamma,\phi}{t}{u}[A]$, noting the restriction with $\phi$ in the equality judgement.
Now the composition operator is defined by the typing and equality rule
\begin{mathpar}
  \inferrule{%
    \hastype{\Gamma}{\phi}{\Face} \\
    \wftype{\Gamma, i:\I}{A} \\
    \hastype{\Gamma, \phi, i : \I}{u}{A} \\
    \hastype{\Gamma}{a_0}{A\subst{0}{i}[\phi\mapsto u\subst{0}{i}]}
  }{%
    \hastype{\Gamma}{\comp{i}{A}{[\phi\mapsto u]}{a_0}}{A\subst{1}{i}[\phi\mapsto u\subst{1}{i}]}
  }.
\end{mathpar}
A simple use of composition is to implement the transport operation for $\Path$ types\label{transport-term}
\[
  \term{transp}^i \, A ~ a
  ~\defeq~
  \comp{i}{A}{[\ZeroF \mapsto []]}{a}
  ~:~
  A\subst{1}{i},
\]
where $a$ has type $A\subst{0}{i}$.
The notation $[]$ stands for an empty \emph{system}.  In general a system is a list
of pairs of faces and terms, and it defines an element of a type by giving the individual
components at each face. We extend the syntax as follows:
\[
    t,u,A,B ~\bnfeq~ \cdots ~|~ [ \phi_1~t_1,\ldots,\phi_n~t_n ].
\]
Below we see two of the rules for systems; they ensure
that the components of a system agree where the faces overlap, and that all the cases possible in the current context are
covered:
\begin{mathpar}
  \inferrule{%
    \wftype{\Gamma}{A}\\
    \eqjudg{\Gamma}{\phi_1\join\ldots\join\phi_n}{\OneF}[\Face] \\
    \hastype{\Gamma, \phi_i}{t_i}{A} \\
    \eqjudg{\Gamma,\phi_i\meet\phi_j}{t_i}{t_j}[A]\\ i,j=1 \ldots n \\
  }{%
    \hastype{\Gamma}{[ \phi_1~t_1,\ldots,\phi_n~t_n ]}{A}
  }
  \and
  \inferrule{%
    \hastype{\Gamma}{[ \phi_1~t_1,\ldots,\phi_n~t_n ]}{A}\\
    \eqjudg{\Gamma}{\phi_i}{\OneF}[\Face]
  }{%
    \eqjudg{\Gamma}{[ \phi_1~t_1,\ldots,\phi_n~t_n ]}{t_i}[A]
  }
\end{mathpar}
We will shorten $[\phi_1\join\ldots\join\phi_n \mapsto [ \phi_1~t_1,\ldots,\phi_n~t_n ]]$ to $[ \phi_1\mapsto t_1,\ldots,\phi_n\mapsto t_n ]$.

A non-trivial example of the use of systems is the proof that $\Path$ is transitive; given $p\,:\,\Path A~a~b$ and
$q\,:\,\Path A~b~c$ we can define   
\[
\term{transitivity}\,p\,q \defeq \pathlambda{i} \comp{j}{A}{[(i=0) \mapsto a, (i=1) \mapsto q\,j]}{(p \, i)} \, : \, \Path A~a~c.
\]
This builds a path between the appropriate endpoints because we have the equalities $\comp{j}{A}{[\OneF \mapsto a]}{(p \, 0)} = a$ and $\comp{j}{A}{[\OneF \mapsto q\,j]}{(p \, 1)} = q\,1 = c$.

For reasons of space we have omitted the descriptions of some features of $\ctt$, such as glueing, and the further
judgemental equalities for terms of the form $\comp{i}{A}{[\phi \mapsto u]}{a_0}$ that depend on the
structure of $A$.

\subsection{Later Types}\label{sec:later}

In Fig.~\ref{fig:typing-rules-later} we present the `later' types of guarded dependent type theory
($\gdtt$)~\cite{Bizjak-et-al:GDTT}, with judgemental equalities in Figs.~\ref{fig:ty-eq-rules-later}
and~\ref{fig:tm-eq-rules-later}. Note that we do not add any new equation for the interaction of compositions with $\later$;
such an equation would be necessary if we were to add the eliminator $\prev$ for $\later$, but this extension (which involves
clock quantifiers) is left to further work.
We delay the presentation of the fixed-point operation until the next section.

The typing rules use the \emph{delayed substitutions} of $\gdtt$, as defined in Fig.~\ref{fig:del-substs}.
Delayed substitutions resemble Haskell-style do-notation, or a delayed form of let-binding.
If we have a term $t:\later{A}$, we cannot access its contents `now', but if we are defining a type or term that itself has
some part that is available `later', then this part \emph{should} be able to use the contents of $t$.
Therefore delayed substitutions allow terms of type $\later{A}$ to be unwrapped by $\later$ and $\pure$.
As observed by Bizjak et al.~\cite{Bizjak-et-al:GDTT} these constructions generalise the \emph{applicative
functor}~\cite{McBride:Applicative} structure of `later' types, by the definitions $\term{pure} \, t \defeq \pure{t}$, and $f \app t \defeq
\pure[\hrt{f' \gets f, t' \gets t}]{f' \, t'}$, as well as a generalisation of the $\app$ operation from simple functions to
$\Pi$-types. We here make the new observation that delayed substitutions can express the function $\latercode:\later{\U}
\to\U$, introduced by Birkedal and M{\o}gelberg~\cite{Mogelberg:2013} to express guarded recursive types as fixed-points
on universes, as $\lambda u.\later[[u'\gets u]]{u'}$; see for example the definition of streams in Sec.~\ref{sec:streams}.

\begin{figure}
  \begin{mathpar}
    \inferrule{%
      \wfcxt{\Gamma}}{%
      \dsubst{\cdot}{\Gamma}{\cdot}}
    \and
    \inferrule{%
      \dsubst{\xi}{\Gamma}{\Gamma'} \\
      \hastype{\Gamma}{t}{\later[\xi]{A}}}{%
      \dsubst{\xi\hrt{x \gets t}}{\Gamma}{\Gamma', x:A}}
  \end{mathpar}
  \caption{Formation rules for delayed substitutions.}
  \label{fig:del-substs}
\end{figure}
\begin{figure}
  \begin{mathpar}
    \inferrule{%
      \wftype{\Gamma,\Gamma'}{A} \\
      \dsubst{\xi}{\Gamma}{\Gamma'}}{%
      \wftype{\Gamma}{\later[\xi]{A}}}
    \and
    \inferrule{%
      \hastype{\Gamma, \Gamma'}{A}{\U} \\
      \dsubst{\xi}{\Gamma}{\Gamma'}}{%
      \hastype{\Gamma}{\later[\xi]{A}}{\U}}
    \and
    \inferrule{%
      \hastype{\Gamma,\Gamma'}{t}{A} \\
      \dsubst{\xi}{\Gamma}{\Gamma'}}{%
      \hastype{\Gamma}{\pure[\xi]{t}}{\later[\xi]{A}}}
  \end{mathpar}
  \caption{Typing rules for later types.}
  \label{fig:typing-rules-later}
\end{figure}

\begin{figure}
    \begin{mathpar}
    \inferrule{%
      \dsubst{\xi\hrt{x\gets t}}{\Gamma}{\Gamma',x:B} \\
      \wftype{\Gamma,\Gamma'}{A}}{%
      \eqjudg{\Gamma}{\later[\xi\hrt{x \gets t}]{A}}{\later[\xi]{A}}}
    \and
    \inferrule{%
      \dsubst{\xi\hrt{x\gets t,y\gets u}\xi'}{\Gamma}{\Gamma',x:B,y:C,\Gamma''} \\
      \wftype{\Gamma,\Gamma'}{C} \\
      \wftype{\Gamma,\Gamma',x:B,y:C,\Gamma''}{A}}{%
      \eqjudg{\Gamma}{\later[\xi\hrt{x\gets t,y\gets u}\xi']{A}}
      {\later[\xi\hrt{y\gets u,x\gets t}\xi']{A}}}
    \and
    \inferrule{%
      \dsubst{\xi}{\Gamma}{\Gamma'} \\
      \wftype{\Gamma,\Gamma', x:B}{A} \\
      \hastype{\Gamma,\Gamma'}{t}{B}}{%
      \eqjudg{\Gamma}{\later[\xi\hrt{x\gets \pure[\xi]{t}}]{A}}
      {\later[\xi]{A\subst{t}{x}}}}
  \end{mathpar}

  \caption{Type equality rules for later types (congruence and equivalence rules are omitted).}
  \label{fig:ty-eq-rules-later}
\end{figure}

\begin{figure}
    \begin{mathpar}
    \inferrule{%
      \dsubst{\xi\hrt{x\gets t}}{\Gamma}{\Gamma', x:B} \\
      \hastype{\Gamma,\Gamma'}{u}{A} }{%
      \eqjudg{\Gamma}{%
        \pure[\xi\hrt{x\gets t}]{u}}{%
        \pure[\xi]{u}}[%
      \later[\xi]{A}]}
    \and
    \inferrule{%
      \dsubst{\xi\hrt{x\gets t,y\gets u}\xi'}{\Gamma}{\Gamma',x:B,y:C,\Gamma''} \\
      \wftype{\Gamma,\Gamma'}{C} \\
      \hastype{\Gamma,\Gamma',x:B,y:C,\Gamma''}{v}{A} }{%
      \eqjudg{\Gamma}{%
        \pure[\xi\hrt{x\gets t,y\gets u}\xi']{v}}{%
        \pure[\xi\hrt{y\gets u,x\gets t}\xi']{v}}[%
      \later[\xi\hrt{x\gets t,y\gets u}\xi']{A}]}
    \and
    \inferrule{%
      \dsubst{\xi}{\Gamma}{\Gamma'} \\
      \hastype{\Gamma,\Gamma',x:B}{u}{A} \\
      \hastype{\Gamma,\Gamma'}{t}{B}}{%
      \eqjudg{\Gamma}%
      {\pure[\xi\hrt{x \gets \pure[\xi]{t}}]{u}}%
      {\pure[\xi]{u\subst{t}{x}}}%
      [\later[\xi]{A\subst{t}{x}}] }
    \and
    \inferrule{%
      \hastype{\Gamma}{t}{\later[\xi]{A}} }{%
      \eqjudg{\Gamma}%
      {\pure[\xi\hrt{x\gets t}]{x}}%
      {t}%
      [\later[\xi]{A}]}
  \end{mathpar}
  \caption{Term equality rules for later types. We omit congruence and equivalence rules, and the rules for terms of type
  $\U$, which reflect the type equality rules of Fig.~\ref{fig:ty-eq-rules-later}.}
  \label{fig:tm-eq-rules-later}
\end{figure}

\begin{example}

In $\gdtt$ it is essential that we can convert terms of type $\later[\xi]{\term{Id}_A \, t~u}$ into terms of type $\term{Id}_{\later[\xi]{A}} \, (\pure[\xi]{t})~(\pure[\xi]{u})$, as it is essential for \emph{L\"ob induction}, the technique of proof by guarded recursion where we assume $\later{p}$, deduce $p$, and hence may conclude $p$ with no assumptions.
This is achieved in $\gdtt$ by postulating as an axiom the following judgemental equality:
\begin{equation}\label{eq:Id_and_later}
  \term{Id}_{\later[\xi]{A}} \, (\pure[\xi]{t})~(\pure[\xi]{u})
  \;=\;
  \later[\xi]{\term{Id}_A \, t~u}
\end{equation}
A term from left-to-right of \eqref{eq:Id_and_later} can be defined using the $\term{J}$-eliminator for identity types, but
the more useful direction is right-to-left, as proofs of equality by L\"ob induction involve assuming that we later have a
path, then converting this into a path on later types.
In fact in $\gctt$ we can define a term with the desired type:
\begin{equation}\label{eq:later_ext}
  \lambda p.\abs{i}{\pure[\xi[p'\gets p]]{p'\, i}} \;:\; (\later[\xi]{\Path A\, t\, u})\to
    \Path\,(\later[\xi]{A})\,(\pure[\xi]{t})\,(\pure[\xi]{u}).
\end{equation}
Note the similarity of this term and type with that of $\term{funext}$, for functional extensionality, presented on
page~\pageref{funext}. Indeed we claim that \eqref{eq:later_ext} provides a computational interpretation of extensionality
for later types.
\end{example}

\subsection{Fixed Points}
\label{sec:fix}

In this section we complete the presentation of $\gctt$ by addressing fixed points.
In $\gdtt$ there are fixed-point constructions $\fix{x}{t}$ with the judgemental equality $\fix{x}{t} =
t\subst{\pure{\fix{x}{t}}}{x}$.
In $\gctt$ we want decidable type checking, including decidable judgemental equality, and so
we cannot admit such an unrestricted unfolding rule.
Our solution it that fixed points should not be judgementally equal to their unfoldings, but merely \emph{path equal}.
We achieve this by decorating the fixed-point combinator with an interval element which specifies the position on this path.
The $0$-endpoint of the path is the stuck fixed-point term, while the $1$-endpoint is the same term unfolded once.
However this threatens canonicity for base types: if we allow stuck fixed-points in our calculus, we could have stuck closed terms $\fix[i]{x}{t}$ inhabiting $\Nat$.
To avoid this, we introduce the \emph{delayed} fixed-point combinator $\dfixEmp$, which produces a term `later' instead of a term `now'.
Its typing rule, and notion of equality, is given in Fig.~\ref{fig:dfix}.
We will write $\fix[r]{x}{t}$ for $t\subst{\dfix[r]{x}{t}}{x}$, $\fix{x}{t}$ for $\fix[0]{x}{t}$, and $\dfix{x}{t}$ for $\dfix[0]{x}{t}$.

\begin{figure}
\begin{mathpar}
  \inferrule{%
    \hastype{\Gamma}{r}{\I} \\
    \hastype{\Gamma, x : \later{A}}{t}{A}
  }{%
    \hastype{\Gamma}{\dfix[r]{x}{t}}{\later{A}}}
  \and
  \inferrule{%
    \hastype{\Gamma, x : \later{A}}{t}{A} }{%
    \eqjudg{\Gamma}%
    {\dfix[1]{x}{t}}%
    {\pure{t\subst{\dfix[0]{x}{t}}{x}}}%
    [\later{A}]}.
\end{mathpar}
  \caption{Typing and equality rules for the delayed fixed-point}
  \label{fig:dfix}
\end{figure}

\begin{lemma}[Canonical unfold lemma]
  \label{prop:unfold-lemma}
  For any term $\hastype{\Gamma, x : \later{A}}{t}{A}$ there is a path between $\fix{x}{t}$ and $t\subst{\pure{\fix{x}{t}}}{x}$, given by the term $\pathlambda{i} \fix[i]{x}{t}$.
\end{lemma}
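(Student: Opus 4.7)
The plan is to assemble the path directly from the typing and equality rules in Fig.~\ref{fig:dfix}, together with the abbreviations $\fix[r]{x}{t} \defeq t\subst{\dfix[r]{x}{t}}{x}$, $\fix{x}{t} \defeq \fix[0]{x}{t}$, and $\dfix{x}{t} \defeq \dfix[0]{x}{t}$, and then invoke path abstraction from Fig.~\ref{fig:path-typing-rules}.

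First I would check that the body $\fix[i]{x}{t}$ is well-typed in the extended context $\Gamma, i:\I$. Since $\hastype{\Gamma, i:\I}{i}{\I}$ and $\hastype{\Gamma, i:\I, x:\later{A}}{t}{A}$ (by weakening the hypothesis), the typing rule for $\dfixEmp$ gives $\hastype{\Gamma, i:\I}{\dfix[i]{x}{t}}{\later{A}}$, and then substitution yields $\hastype{\Gamma, i:\I}{\fix[i]{x}{t}}{A}$.

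Next I would compute the two endpoints judgementally. At $i=0$ the body becomes $t\subst{\dfix[0]{x}{t}}{x}$, which is $\fix{x}{t}$ by the abbreviations. At $i=1$ the equality rule of Fig.~\ref{fig:dfix} gives $\eqjudg{\Gamma}{\dfix[1]{x}{t}}{\pure{t\subst{\dfix[0]{x}{t}}{x}}}[\later{A}]$, that is $\dfix[1]{x}{t} = \pure{\fix{x}{t}}$, and then a congruence for substitution turns $t\subst{\dfix[1]{x}{t}}{x}$ into $t\subst{\pure{\fix{x}{t}}}{x}$, the desired right endpoint.

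Finally, the introduction rule for $\Path$ applied to the body gives
\[
  \hastype{\Gamma}{\pathlambda{i}\fix[i]{x}{t}}{\Path A\; (\fix{x}{t})\; (t\subst{\pure{\fix{x}{t}}}{x})},
\]
since the substitutions $\subst{0}{i}$ and $\subst{1}{i}$ into the body recover precisely the two endpoints computed above.

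Essentially no obstacle arises here: the lemma is designed to be immediate from the equality rule for $\dfix[1]{x}{t}$ and the fact that $\fix[r]{x}{t}$ is uniform in $r$. The only mild subtlety is recognising that the judgemental equality on the right endpoint relies on congruence of substitution for judgemental equality on subterms of type $\later{A}$, and that $\dfix[i]{x}{t}$ is genuinely parameterised by an arbitrary interval element $i:\I$ rather than only by the constants $0$ and $1$ --- this is precisely what the typing rule in Fig.~\ref{fig:dfix} ensures.
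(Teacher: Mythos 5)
Your proof is correct and fills in, step by step, the argument that the paper leaves implicit (the paper states the lemma without a proof block, treating it as immediate from the $\dfixEmp$ rules). Your computation of the two endpoints via the abbreviation $\fix[r]{x}{t}\defeq t\subst{\dfix[r]{x}{t}}{x}$, the equality rule for $\dfix[1]{x}{t}$, congruence of substitution, and the $\Path$ introduction rule is exactly the intended reasoning.
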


Transitivity of paths (via compositions) ensures that $\fix{x}{t}$ is path equal to any number of fixed-point unfoldings of itself.

A term $a$ of type $A$ is said to be a \emph{guarded fixed point} of a function $f:\later{A}\to A$ if there is a
path from $a$ to $f(\pure{a})$.

\begin{proposition}[Unique guarded fixed points]
  \label{prop:unique-fix}
  Any guarded fixed-point $a$ of a term $f : \later{A} \to A$ is path equal to $\fix{x}{f\, x}$.
\end{proposition}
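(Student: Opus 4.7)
The plan is to use L\"ob induction, exploiting the extensionality term from (\ref{eq:later_ext}) together with Lemma \ref{prop:unfold-lemma}. Set $P \defeq \Path A\, a\, (\fix{x}{f\, x})$ and, using the fixed-point combinator, reduce the task to constructing a term $h : \later P \to P$; the desired element of $P$ will then be $\fix{y}{h\, y}$.

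To build $h$, suppose $q : \later P$. First, feed $q$ to (\ref{eq:later_ext}) to obtain the path
\[
  \pathlambda{i} \pure[q' \gets q]{q'\, i}
  \;:\;
  \Path\,(\later{A})\,(\pure{a})\,(\pure{\fix{x}{f\, x}}),
\]
whose endpoints are computed using the rules of Fig.~\ref{fig:tm-eq-rules-later} (since $a$ and $\fix{x}{f\, x}$ do not depend on $q'$, their delayed substitutions collapse to ordinary $\pure{}$). Applying $f$ pointwise, i.e., forming $\pathlambda{i} f\,(\pure[q' \gets q]{q'\, i})$, gives a path in $A$ from $f\,(\pure{a})$ to $f\,(\pure{\fix{x}{f\, x}})$. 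Now chain three paths using transitivity (defined by composition on page~\pageref{transport-term}): the hypothesis $p : \Path A\, a\, (f\,(\pure{a}))$, the path just constructed, and the reverse of the unfold path $\pathlambda{i} \fix[i]{x}{f\, x}$ from Lemma~\ref{prop:unfold-lemma}, which runs from $\fix{x}{f\, x}$ to $f\,(\pure{\fix{x}{f\, x}})$. The resulting composite is a path from $a$ to $\fix{x}{f\, x}$, i.e., an element of $P$, which defines $h\, q$.

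The only delicate step is checking that the endpoints of the three segments align as claimed, which in turn hinges on the type-equality rule that cancels $\pure{}$ against a delayed substitution. Once $h$ is in place, $\fix{y}{h\, y} : P$ witnesses the desired path equality. I expect this endpoint bookkeeping, rather than any conceptual hurdle, to be the main obstacle, together with arranging the reverse and double composition in a single use of the composition operator so that the resulting term is explicit enough to have been verified in the prototype type-checker.
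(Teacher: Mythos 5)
Your proposal is correct and follows essentially the same route as the paper's proof: Löb induction on $\Path A\, a\, (\fix{x}{f\, x})$, the later-extensionality term from \eqref{eq:later_ext} applied to the induction hypothesis (with $f$ applied pointwise), then composition with $p$ on one side and with the reversed unfold path from Lemma~\ref{prop:unfold-lemma} on the other. The paper packages the two transitivity steps into a single $\term{comp}$ with a two-element system (walls $p^{-1}$ and $f(\dfix[\Ineg{j}]{x}{f\, x})$ over the base $f(\pure[\hrt{q\gets\term{ih}}]{q\,i})$), which is precisely the ``single use of the composition operator'' you anticipate as the remaining bookkeeping.
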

\begin{proof}
  Given $p : \Path A ~ a ~ (f \, (\pure{a}))$,
  we proceed by L\"ob induction, i.e., by assuming $\term{ih} : \later{(\Path A ~ a ~ (\fix{x}{f\, x}))}$.
  We can define a path 
  \[
    s\defeq \pathlambda{i} f (\pure[\hrt{q \gets \term{ih}}]{q\, i}) 
    ~:~
    \Path A ~ (f (\pure{a})) ~ (f (\pure{\fix{x}{f\, x}})),
  \]
  which is well-typed because the type of the variable $q$ ensures that $q\, 0$ is judgementally equal to $a$, resp. $q\,1$
  and $\fix{x}{f\, x}$.
  Note that we here implicitly use the extensionality principle for later \eqref{eq:later_ext}.
  We compose $s$ with $p$, and then with the inverse of the canonical unfold lemma of Lem.~\ref{prop:unfold-lemma},
  to obtain our path from $a$ to $\fix{x}{f\, x}$.
  We can write out our full proof term, where $p^{-1}$ is the inverse path of $p$, as
  \[
    \fix{\term{ih}}{\pathlambda{i} \comp{j}{A}{[(i=0) \mapsto p^{-1}, (i=1) \mapsto f (\dfix[\Ineg{j}]{x}{f\, x})]}{(f (\pure[\hrt{q \gets \term{ih}}]{q\, i}) )}}.
    \qedhere
  \]
\end{proof}

\subsection{Programming and Proving with Guarded Recursive Types}
In this section we show some simple examples of programming with guarded recursion, and prove properties of our
programs using L\"ob induction.

\subparagraph*{Streams.}\label{sec:streams}
The type of guarded recursive streams in $\gctt$, as with $\gdtt$, are defined as fixed points on the universe:
\[
  \gstream{A} \;\defeq\; \fix{x}{A\times \later[[y\gets x]]{y}}
\]
Note the use of a delayed substitution to transform a term of type $\later{\U}$ to one of type $\U$, as discussed at the start
of Sec.~\ref{sec:later}. Desugaring to restate this in terms of $\dfixEmp$, we have
\[
  \gstream{A} \;=\;
  A\times\later[[y\gets\dfix[0]{x}{A\times\later[[y\gets x]]{y}}]]{y}
\]
The head function $\hd:\gstream{A}\to A$ is the first projection.
The tail function, however, cannot be the second projection, since this yields a term of type
\begin{equation}\label{eq:streamtail0}
\later[\hrt{y \gets \dfix[0]{x}{A\times\later[\hrt{y\gets x}]{y}}}]{y}
\end{equation}
rather than the desired $\later{\gstream{A}}$.
However we are not far off; $\later{\gstream{A}}$ is judgementally equal to
$\later[\hrt{y \gets \dfix[1]{x}{A\times\later[\hrt{y\gets x}]{y}}}]{y}$, which is
the same term as \eqref{eq:streamtail0}, apart from endpoint $1$ replacing $0$.
The canonical unfold lemma (Lem.~\ref{prop:unfold-lemma}) tells us that we can build a path in $\U$ from $\gstream{A}$ to
$A \times \later{\gstream{A}}$; call this path $\abs{i}{\gstream{A}^i}$. Then we can transport between these types:
\[
  \term{unfold} \, s \defeq \term{transp}^i \, \gstream{A}^i \, s
  \qquad\qquad\qquad
  \term{fold} \, s \defeq \term{transp}^i \, \gstream{A}^{\Ineg{i}} \, s
\]
Note that the compositions of these two operations are path equal to identity functions, but not judgementally equal.
We can now obtain the desired tail function $\term{tl} : \gstream{A} \to \later{\gstream{A}}$ by composing the second projection with $\term{unfold}$, so $\term{tl} \, s \defeq (\term{unfold} \, s).2$.
Similarly we can define the stream constructor $\term{cons}$ (written infix as $::$) by using $\term{fold}$:
\[
  \term{cons} \defeq \lambda a, s . \term{fold} \, (a, s)
  ~:~
  A \to \later{\gstream{A}} \to \gstream{A}.
\]

We now turn to higher order functions on streams. We define $\term{zipWith} : (A \to B \to C) \to \gstream{A} \to
\gstream{B} \to \gstream{C}$, the stream function which maps a binary function on two input streams to produce an output stream, as
\begin{align*}
  \term{zipWith} \, f \defeq
  \fix{z}{\lambda s_1, s_2 . f \, (\term{hd} \, s_1) \, (\term{hd} \, s_2) \,::\,
  \pure[\vrt{z' \gets z \\ t_1 \gets \term{tl} \, s_1 \\ t_2 \gets \term{tl} \, s_2}]{z' \, t_1 \, t_2}}.
\end{align*}
Of course $\term{zipWith}$ is definable even with simple types and $\later$, but in $\gctt$ we can go further and prove properties
about the function:
\begin{proposition}[$\term{zipWith}$ preserves commutativity]\label{prop:zipwith-preserves-comm}
  If $f : A \to A \to B$ is commutative, then $\term{zipWith}\, f : \gstream{A} \to \gstream{A} \to \gstream{B}$ is commutative.
\end{proposition}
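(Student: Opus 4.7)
The plan is to proceed by L\"ob induction on the desired path. The canonical unfold lemma (Lem.~\ref{prop:unfold-lemma}) handles the interface between the fixed-point $\term{zipWith}\,f$ and its one-step unfolding, and the extensionality principle for later~\eqref{eq:later_ext} lifts the inductive hypothesis across the $\pure$ appearing in the unfolded tail.

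First I assume
\[
  \term{ih} : \later{\bigl((s_1, s_2 : \gstream{A}) \to \Path\,\gstream{B}\,(\term{zipWith}\,f\,s_1\,s_2)\,(\term{zipWith}\,f\,s_2\,s_1)\bigr)}
\]
and fix inputs $s_1, s_2 : \gstream{A}$. Commutativity of $f$ immediately provides a head path $q_h : \Path\,B\,(f\,(\hd\,s_1)\,(\hd\,s_2))\,(f\,(\hd\,s_2)\,(\hd\,s_1))$. For the tail, the term $\pure[\hrt{z'\gets\term{ih}, t_1\gets\tl\,s_1, t_2\gets\tl\,s_2}]{z'\,t_1\,t_2}$ inhabits a later-held path type, and~\eqref{eq:later_ext} promotes it to an honest path $q_t$ in $\later{\gstream{B}}$ whose endpoints are precisely the tail components appearing in the once-unfolded zips (after applying the equations of Fig.~\ref{fig:tm-eq-rules-later} to drop the unused $z'$-binding).

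Next I combine $q_h$ and $q_t$ via $\term{cons}$, setting $p' \defeq \pathlambda{i}\,\term{cons}\,(q_h\,i)\,(q_t\,i)$; this is a path between the once-unfolded versions of the two zips. Applying Lem.~\ref{prop:unfold-lemma} twice gives paths $u_{12}, u_{21}$ from $\term{zipWith}\,f\,s_1\,s_2$ and $\term{zipWith}\,f\,s_2\,s_1$ to their respective one-step unfoldings; two invocations of the transitivity construction shown earlier in the paper compose $u_{12}$ with $p'$ and then with the inverse of $u_{21}$ to yield the required path in $\Path\,\gstream{B}\,(\term{zipWith}\,f\,s_1\,s_2)\,(\term{zipWith}\,f\,s_2\,s_1)$.

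The principal obstacle will be bookkeeping the delayed substitutions. The $\pure$ inside the unfolded cons-cell binds three variables (the fixed-point variable $z'$ and the two tails), so to make the endpoints of $p'$ align judgementally with the unfolded zips one must use the rule $\pure[\xi\hrt{x\gets\pure[\xi]{t}}]{u} = \pure[\xi]{u\subst{t}{x}}$ of Fig.~\ref{fig:tm-eq-rules-later} to eliminate $z'$ from the unfolding, together with the rule dropping unused bindings of Fig.~\ref{fig:ty-eq-rules-later}. Once these rewrites are in place the construction above typechecks, and the whole argument fits inside a $\fix{\term{ih}}{\cdots}$.
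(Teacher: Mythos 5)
Your proposal is correct and follows essentially the same route as the paper: Löb induction, building a path between the once-unfolded left and right sides by pairing the head path (from commutativity of $f$) with a tail path obtained by applying the extensionality principle~\eqref{eq:later_ext} to the inductive hypothesis, then correcting for the unfolding via the canonical unfold lemma. The only cosmetic difference is that the paper packages the final correction as a single two-dimensional composition (a square with $v$ on the top and the unfold-lemma paths on the vertical sides), whereas you describe it as two sequential transitivity compositions; these yield the same path.
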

\begin{proof}
  Let $\term{c} : (a_1 : A) \to (a_2 : A) \to \Path B ~ (f\, a_1 \, a_2) ~ (f\, a_2 \, a_1)$ witness commutativity of $f$.
  We proceed by L\"ob induction, i.e., by assuming
  \[
    \term{ih} : \later{\left((s_1:\gstream{A})\to(s_2:\gstream{A})\to \Path B 
      ~ (\term{zipWith}\, f \, s_1 \, s_2)
      ~ (\term{zipWith}\, f \, s_2 \, s_1)\right)}.
  \]
  Let $i:\I$ be a fresh name, and $s_1, s_2 : \gstream{A}$.
  Our aim is to construct a stream $v$ which is $\term{zipWith}\, f \, s_1 \, s_2$ when substituting $0$ for $i$, and $\term{zipWith}\, f \, s_2 \, s_1$ when substituting $1$ for $i$.
  An initial attempt at this proof is the term
  \[
    v \,\defeq\,
    \term{c} \, (\term{hd}\, s_1) \, (\term{hd}\, s_2) \, i ~::~ 
    \pure[\vrt{q \gets \term{ih} \\
      t_1 \gets \term{tl}\, s_1 \\
      t_2 \gets \term{tl} \, s_2}]%
    {q \, t_1 \, t_2 \, i}
    ~:~
    \gstream{B},
  \]
  which is equal to 
  \[
    f \, (\term{hd} \, s_1) \, (\term{hd} \, s_2) ~::~
    \pure[\vrt{t_1 \gets \term{tl} \, s_1 \\ t_2 \gets \term{tl} \, s_2}]%
    {\term{zipWith} \, f \, t_1 \, t_2}
  \]
  when substituting $0$ for $i$, which is $\term{zipWith}\, f\, s_1\, s_2$, but \emph{unfolded once}.
  Similarly, $v\subst{1}{i}$ is $\term{zipWith}\, f\, s_2\, s_1$ unfolded once.
  Let $\abs{j}{\term{zipWith}^j}$ be the canonical unfold lemma associated with $\term{zipWith}$ (see Lem.~\ref{prop:unfold-lemma}).
  We can now finish the proof by composing $v$ with (the inverse of) the canonical unfold lemma.
  Diagrammatically, with $i$ along the horizontal axis and $j$ along the vertical:
  \begin{mathpar}
    \raisebox{-0.5\height}{\includegraphics{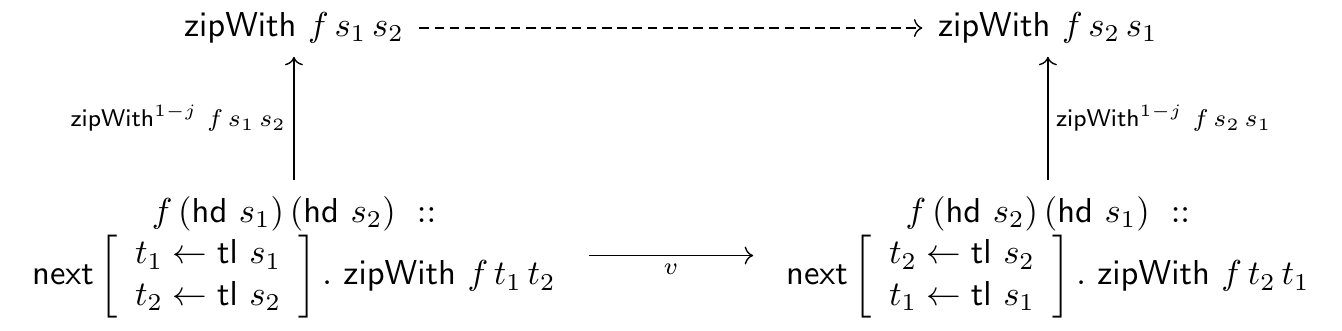}}
  \end{mathpar}
  The complete proof term, in the language of the type checker, can be found in Appendix~\ref{app:zipwith}.
\end{proof}

\subparagraph*{Guarded recursive types with negative variance.}
A key feature of guarded recursive types are that they support \emph{negative} occurrences of recursion variables.
This is important for applications to models of program logics~\cite{Birkedal:Step}.
Here we consider a simple example of a negative variance recursive type, namely
$  \term{Rec}_A \defeq \fix{x}{(\later[[x'\gets x]]{x'})\to A} $,
which is path equal to $\later{\term{Rec}_A} \to A$.
As a simple demonstration of the expressiveness we gain from negative guarded recursive types, we define a guarded variant of Curry's Y combinator:
\[
  \begin{array}{lclcl}
    \Delta &\defeq& \lambda x.f(\pure[[x'\gets x]]{((\unfold x') x})) &:& \later\term{Rec}_A\to A \\
    \Ycom &\defeq& \lambda f.\Delta(\pure\fold\Delta) &:& (\later A\to A)\to A,
  \end{array}
\]
where $\term{fold}$ and $\term{unfold}$ are the transports along the path between $\term{Rec}_A$ and $\later{\term{Rec}_A} \to A$.
As with $\term{zipWith}$, $\term{Y}$ can be defined with simple types and $\later$~\cite{Abel:Formalized}; what is new to $\gctt$ is that we can also prove properties about it:

\begin{proposition}[$\term{Y}$ is a guarded fixed-point combinator]
  \label{prop:Y-is-fixed-point-combinator}
  $\term{Y} f$ is path equal to $f \, (\pure({\term{Y} f}))$, for any $f : \later{A} \to A$. Therefore, by Prop.~\ref{prop:unique-fix}, $\term{Y}$ is path equal to $\term{fix}$.
\end{proposition}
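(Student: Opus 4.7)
The plan is to exhibit $\Ycom f$ as a guarded fixed-point of $f$, i.e., to construct a path from $\Ycom f$ to $f(\pure{\Ycom f})$; the second claim that $\Ycom$ is path equal to $\fix$ then follows by Proposition~\ref{prop:unique-fix} (applied for each $f$, together with $\term{funext}$ from page~\pageref{funext} if one wants path equality of the functions themselves).

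First I would $\beta$-normalise $\Ycom f = \Delta(\pure{\fold \Delta})$. Substituting $\pure{\fold \Delta}$ for $x$ in the body of $\Delta$ and then applying the delayed-substitution equation $\pure[\hrt{x' \gets \pure{t}}]{u} = \pure{u\subst{t}{x'}}$ from Figure~\ref{fig:tm-eq-rules-later} yields the judgemental equality
\[
  \Ycom f \;=\; f\bigl(\pure{(\unfold(\fold \Delta))\,(\pure{\fold \Delta})}\bigr).
\]
On the other hand, $f(\pure{\Ycom f}) = f(\pure{\Delta\,(\pure{\fold \Delta})})$, so the two expressions differ only in that $\unfold(\fold \Delta)$ appears in place of $\Delta$ as the outermost function. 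It therefore suffices to produce a single path
\[
  q \;:\; \Path\,(\later{\term{Rec}_A} \to A)\,(\unfold(\fold \Delta))\,\Delta,
\]
since then the desired proof term is just $\pathlambda{j}\, f\bigl(\pure{q\, j\, (\pure{\fold \Delta})}\bigr)$, whose two endpoints match the displayed expressions.

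The hard part is constructing $q$. By definition, $\fold$ and $\unfold$ are the transports along the canonical unfold path $\abs{i}{\term{Rec}_A^i}$ associated with $\term{Rec}_A$ by Lemma~\ref{prop:unfold-lemma} and along its reversal $\abs{i}{\term{Rec}_A^{\Ineg i}}$, respectively, so $q$ is an instance of the general cubical fact that transporting along a path and then along its inverse is path-equal to the identity on the source type. Concretely, one fills a two-dimensional composition in $\later{\term{Rec}_A} \to A$ whose sides are $\term{transp}^i\,\term{Rec}_A^i\,\Delta$ and the constant path at $\Delta$, producing the required endpoints for $q$. Once $q$ is in hand, all remaining steps are mechanical uses of $\beta$ and the delayed-substitution rules already invoked, after which the final claim drops out of Proposition~\ref{prop:unique-fix}.
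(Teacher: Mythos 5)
Your proof follows essentially the same route as the paper: both simplify $\Ycom f$ to $f(\pure{(\unfold(\fold\Delta))(\pure{\fold\Delta})})$ via $\beta$ and the delayed-substitution law, observe that the only mismatch with $f(\pure{\Ycom f})=f(\pure{\Delta(\pure{\fold\Delta})})$ is $\unfold(\fold\Delta)$ versus $\Delta$, bridge it with the standard path showing transport followed by inverse transport is path-equal to the identity, and close by congruence (you even supply the explicit congruence term $\pathlambda{j}f(\pure{q\,j\,(\pure{\fold\Delta})})$ that the paper leaves implicit). The only quibble is the description of the filling for $q$: the term $\term{transp}^i\,\term{Rec}_A^i\,\Delta$ as written is ill-typed since $\Delta$ already lives at the $1$-endpoint $\later{\term{Rec}_A}\to A$ rather than at $\term{Rec}_A$, so the square you sketch should be anchored at $\fold\Delta$ and built with $\term{fill}$ rather than $\term{transp}$; but this is a standard $\ctt$ fact that the paper itself also invokes without proof.
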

\begin{proof}
  $\term{Y} f$ simplifies to $f\, (\pure{(\term{unfold} \, (\term{fold} \Delta) \, (\pure{\term{fold} \Delta}))})$, and $\term{unfold}\, (\term{fold} \Delta)$ is path equal to $\Delta$.
A congruence over this path yields our path between $\term{Y} f$ and
$f (\pure{(\term{Y} f)})$.
\end{proof}

\section{Semantics}
\label{sec:semantics}

In this section we sketch the semantics of $\gctt$.
The semantics is based on the category $\totcube$ of presheaves on the category $\cube \times \omega$, where $\cube$ is the \emph{category of cubes}~\cite{Cubical} and $\omega$ is the poset of natural numbers.
The category of cubes is the opposite of the Kleisli category of the free De Morgan algebra monad on finite sets.
More concretely, given a countably infinite set of names $i, j, k, \ldots$, $\cube$ has as objects finite sets of names $I$, $J$.
A morphism $I \to J \in \cube$ is a \emph{function} $J \to \dM{I}$, where $\dM{I}$ is the free De Morgan algebra with generators $I$.

Following the approach of Cohen et al.~\cite{Cubical}, contexts of $\gctt$ will be interpreted as objects of $\totcube$.
Types in context $\Gamma$ will be interpreted as pairs $(A, c_A)$ of a presheaf $A$ on the
category of elements of $\Gamma$ and a \emph{composition structure} $c_A$. We
call such a pair a \emph{fibrant} type.

To aid in defining what a composition structure is, and in showing that composition
structure is preserved by all the necessary type constructions,
we will make use of the internal language of $\totcube$ in the form of
\emph{dependent predicate logic}; see for example Phoa~\cite[App.~I]{phoa1992introduction}.

A type of $\gctt$ in context $\Gamma$ will then be interpreted as a pair of a type $\Lwftype{\Gamma}{A}$ in the internal language of $\totcube$, and a composition structure $c_A$, where $c_A$ is a term in the internal language of a specific type $\LCompTy{\Gamma}{A}$, which we define below after introducing the necessary constructs.
Terms of $\gctt$ will be interpreted as terms of the internal language.
We use \emph{categories with families}~\cite{dybjer1996internal} as our notion of a model.
Due to space limits we omit the precise definition of the category with families here, and refer to the online technical appendix.

The semantics is split into several parts, which provide semantics at different levels of generality.
\begin{enumerate}
\item We first show that every presheaf topos with a non-trivial internal De Morgan algebra $\I$ satisfying the disjunction property can be used to give semantics to the subset of the cubical type theory $\ctt$ without glueing and the universe.
  We further show that, for any category $\DD$, the category of presheaves on $\cube \times \DD$ has an interval $\I$, which is the inclusion of the interval in presheaves over the category of cubes $\cube$.
\item We then extend the semantics to include glueing and universes.
  We show that the topos of presheaves $\cube \times \DD$ for any category $\DD$ with an initial object can be used to give semantics to the entire cubical type theory.
\item Finally, we show that the category of presheaves on $\cube \times \omega$ gives semantics to delayed substitutions and fixed points.
  Using these and some additional properties of the delayed substitutions we show in the internal language of $\totcube$ that $\later[\xi]{A}$ has composition whenever $A$ has composition.
\end{enumerate}
Combining all three, we give semantics to $\gctt$ in $\totcube$.

\subsection{Model of $\ctt$ Without Glueing and the Universe}
\label{sec:model-of-the-basics}

Let $\mathcal{E}$ be a topos with a natural numbers object, and let $\I$ be a De Morgan algebra internal to $\mathcal{E}$
which satisfies the \emph{finitary disjunction property}, i.e.,
\[
  (i \join j) = 1 \implies (i = 1) \join (j = 1),
  \quad\text{and}\quad
  \lnot(0 = 1).
\]

\subparagraph*{Faces.}\label{sec:faces}

Using the interval $\I$ we define the type $\LFace$ as the image of the function $\cdot = 1 : \I \to \Omega$, where $\Omega$ is the subobject classifier.
More precisely, $\LFace$ is the subset type
\begin{align*}
  \LFace \defeq \left\{ p : \Omega \isetsep \exists (i : \I), p = (i = 1) \right\}
\end{align*}
We will implicitly use the inclusion $\LFace \to \Omega$.
The following lemma states in particular that the inclusion is compatible with all the lattice operations, so omitting it is justified.
The disjunction property is crucial for validity of this lemma.
\begin{lemma}\leavevmode
  \begin{itemize}
  \item $\Face$ is a lattice for operations inherited from $\Omega$.
  \item The corestriction $\cdot = 1 : \I \to \Face$ is a lattice homomorphism.
    It is not injective in general.
  \end{itemize}
\end{lemma}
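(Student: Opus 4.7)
The plan is to establish both bullets in parallel, since the same elementary calculations prove closure of $\Face$ under the lattice operations of $\Omega$ and the homomorphism property of $(\cdot = 1)$. I would begin with the standard bounded-lattice identity that $x \meet y = 1$ if and only if $x = 1$ and $y = 1$; applied to $\I$ this gives the equality $(i \meet j = 1) = (i = 1) \wedge (j = 1)$ in $\Omega$, showing simultaneously that $\Face$ is closed under $\wedge$ and that $(\cdot = 1)$ preserves $\meet$. The top $\top = (1_\I = 1)$ lies in $\Face$ by definition, and the bottom $\bot = (0_\I = 1)$ lies in $\Face$ by the assumption $\lnot(0 = 1)$, so $0$ and $1$ are also preserved.

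The key step, and the one place where the disjunction property is essential, is closure under $\vee$. Given $\phi = (i=1)$ and $\psi = (j=1)$ in $\Face$, I would claim $\phi \vee \psi = (i \join j = 1)$ as elements of $\Omega$. The direction $(i = 1) \vee (j = 1) \Rightarrow (i \join j = 1)$ is immediate in any bounded lattice. The converse $(i \join j = 1) \Rightarrow (i = 1) \vee (j = 1)$ is precisely the finitary disjunction property assumed for $\I$. This simultaneously yields closure of $\Face$ under $\vee$ and preservation of $\join$ by $(\cdot = 1)$, completing both stated items. This is the main obstacle: without the disjunction property, the join of two faces would live only in $\Omega$, with no a priori reason to return to $\Face$, so the argument collapses at exactly this point.

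For the parenthetical non-injectivity remark, I would use a generic De Morgan witness. In a De Morgan algebra we need not have $i \meet \Ineg{i} = 0$, yet both elements map to $\bot$ in $\Face$: if $(i \meet \Ineg{i} = 1)$ held, then both $i = 1$ and $\Ineg{i} = 1$ would hold, whence $0 = \Ineg{1} = 1$, contradicting $\lnot(0=1)$. So $i \meet \Ineg{i}$ and $0$ become identified in $\Face$ despite being distinct in $\I$ in general, giving the desired failure of injectivity.
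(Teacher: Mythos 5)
Your proof is correct, and it identifies exactly the pivot the paper itself flags: closure of $\Face$ under $\vee$ (equivalently, $(\cdot=1)$ preserving $\join$) is where the finitary disjunction property is used, while the $\meet$, top, and bottom cases are routine bounded-lattice calculations, with $\bot=(0=1)$ requiring $\lnot(0=1)$. The non-injectivity witness $i\meet\Ineg{i}$ versus $0$ is the standard one for non-Boolean De Morgan algebras and is what one would use for the free De Morgan algebra underlying $\I$ in the intended model. (The paper itself defers the proof to the technical appendix, so a line-by-line comparison is not possible, but your argument matches the hints given in the surrounding text.) One minor point worth tightening: closure of $\Face$ under the operations should be phrased as internal-logic reasoning under the existential in the definition of $\Face$ (existential elimination to obtain representatives $i,j$, then exhibiting $i\meet j$ resp.\ $i\join j$ as the required witness), rather than picking representatives $\phi=(i=1)$ as if elements of $\Face$ came with canonical preimages; the content is the same, but in a topos-internal argument the bookkeeping matters.
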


Given $\hastype{\Gamma}{\phi}{\LFace}$, we write $\Lface{\phi} \defeq \LId{\LFace}{\phi}{\top}$.
Given $\Lwftype{\Gamma}{A}$ and $\Lhastype{\Gamma}{\phi}{\LFace}$ a \emph{partial element} of type $A$ of \emph{extent} $\phi$ is a term $t$ of type $\Lhastype{\Gamma}{t}{\Pi(p : \Lface{\phi}).A}$.
If we are in a context with $p : \Lface{\phi}$, then we will treat such a partial element $t$ as a term of type $A$, leaving implicit the application to the proof $p$, i.e., we will treat $t$ as $t\,p$.
We will often write $\Gamma, \Lface{\phi}$ instead of $\Gamma, p : \Lface{\phi}$ when we do not mention the proof term $p$ explicitly in the rest of the judgement.
This is justified since inhabitants of $\Lface{\phi}$ are unique up to judgemental equality (recall that dependent predicate logic is a logic over an extensional dependent type theory).
Given $\Lwftype{\Gamma, p : \Lface{\phi}}{B}$ we write $\Lfaceto{B}{\phi}$ for the dependent function space $\Pi(p : \Lface{\phi}).B$ and again leave the proof $p$ implicit.

For a term $\Gamma, p:\Lface{\phi} \vdash u : A$ we define $A[\phi \mapsto u] \defeq \Sigma (a:A).
\Lfaceto{\left(\LId{A}{a}{u}\right)}{\phi}$.

\subparagraph*{Compositions.}
Faces allow us to define the type of \emph{compositions} $\LCompTy{\Gamma}{A}$.
Homotopically, compositions allow us to put a lid on a box~\cite{Cubical}.
Given $\Lwftype{\Gamma}{A}$ we define the corresponding type of compositions as
\begin{align*}
  \LCompTy{\Gamma}{A} \defeq \Pi 
  &(\gamma : \I \to \Gamma)
  (\phi : \LFace)
  \left(u : \Pi (i:\I). \Lfaceto{\left(A(\gamma(i))\right)}{\phi}\right) . \\
  & A(\gamma(0))[ \phi \mapsto u(0) ] \to A(\gamma(1))[ \phi \mapsto u(1) ].
\end{align*}
Here we treat the context $\Gamma$ as a closed type.
This is justified because there is a canonical bijection between contexts and closed types of the internal language.
The notation $A(\gamma(i))$ means substitution along the (uncurried) $\gamma$.

Due to lack of space we do not show how the standard constructs of the type theory are interpreted.
We only sketch how the following composition term is interpreted in terms of the composition in the model.
\begin{mathpar}
  \inferrule{%
    \hastype{\Gamma}{\phi}{\Face} \\
    \wftype{\Gamma, i:\I}{A} \\
    \hastype{\Gamma, \phi, i : \I}{u}{A} \\
    \hastype{\Gamma}{a_0}{A\subst{0}{i}[\phi\mapsto u\subst{0}{i}]}
  }{%
    \hastype{\Gamma}{\comp{i}{A}{[\phi\mapsto u]}{a_0}}{A\subst{1}{i}[\phi\mapsto u\subst{1}{i}]}
  }.
\end{mathpar}
By assumption we have $c_A$ of type $\LCompTy{\Gamma, i : \I}{A}$ and $u$ and $a_0$ are interpreted as terms in the internal language of the corresponding types.
The interpretation of composition is the term
\begin{align*}
  \Lhastype{\gamma : \Gamma}{c_A \left(\lambda (i : \I) . (\gamma, i)\right)
                                 \phi
                                 \left(\lambda (i : \I) (p : \Lface{\phi}) . u\right)
                                 a_0}
                            {A(\gamma(1))[ \phi \mapsto u(1)]}
\end{align*}
where we have omitted writing the proof $u(0) = a_0$ on $\Lface{\phi}$.

\subparagraph*{Concrete models.}
The category of cubical sets has an internal interval type satisfying the disjunction property~\cite{Cubical}.
It is the functor mapping $I \in \cube$ to $\dM{I}$. Since the theory of a De Morgan
algebra with $0 \neq 1$ and the disjunction property is geometric~\cite[Section
X.$3$]{maclanemoerdijk92} we have that for any topos $\mathcal{F}$ and geometric morphism
$\phi:\mathcal{F}\to \psh{\cube}$, 
 $\phi^*(\I) \in \mathcal{F}$ is a De Morgan algebra with the disjunction
 property\footnote{A statement very close to this can be used as a characterisation of
   $\psh{\cube}$: this topos classifies the geometric theory of flat De Morgan algebras~\cite{Spitters:TYPES}.}.
In particular, given any category $\DD$ there is a projection functor $\pi : \cube \times
\DD \to \cube$ which induces the (essential) geometric morphism $\pi^* \dashv \pi_* : \psh{\cube \times \DD} \to \psh{\cube}$, where $\pi^*$ is precomposition with $\pi$, and $\pi_*$ takes limits along $\DD$.

\subparagraph*{Summary.}
With the semantic structures developed thus far we can give semantics to the subset of $\ctt$ without glueing and the
universe.

\subsection{Adding Glueing and the Universe}\label{sec:universes-and-glueing}
The glueing construction~\cite[Sec.~6]{Cubical} is used to prove both fibrancy and, subsequently, univalence of the universe of fibrant types.
Concretely, given
\[\Gamma \vdash \phi : \I\qquad
  \Gamma, [\phi] \vdash T\qquad
  \Gamma \vdash A\qquad
  \Gamma \vdash w : (T \to A)^{\phi}\]
we define the type $\LGlue{\phi}{T}{w}{A}$ in two steps.
First we define the type\footnote{This type is already present in Kapulkin at al.~\cite[Thm\ 3.4.1]{kapulkin2012simplicial}.}
\[
  Glue'_\Gamma(\phi, T, A, w) \defeq \sum_{a : A}\sum_{t : T^{\phi}}\prod_{p : [\phi]} w p (t p) = a.
\]
For this type we have the following property $\Gamma, [\phi] \vdash T \cong Glue'_\Gamma(\phi, T, A, w)$.
However, we need an equality, not an isomorphism, to obtain the correct typing rules.
The technical appendix provides a general strictification lemma which allows us to define
the type $Glue$.

To show that the type $\LGlue{\phi}{T}{w}{A}$ is fibrant we need to additionally assume that the map $\phi \mapsto \lambda\_.\phi : \LFace \to (\I \to \LFace)$ has an internal right adjoint $\forall$.
Such a right adjoint exists in all toposes $\psh{\cube\times\DD}$, for any small category $\DD$ with an initial object.

\subparagraph*{Universe of fibrant types.}
Given a (Grothendieck) universe $\UG$ in the meta-theory, the Hofmann-Streicher universe~\cite{Hofmann-Streicher:lifting} $\LUw$ in $\totcube$ maps $(I,n)$ to the set of functors valued in $\UG$ on the category of elements of $y(I,n)$,
where $y$ is the Yoneda embedding.
As in Cohen et al.~\cite{Cubical} we define the universe of \emph{fibrant} types $\LUfw$ by setting $\LUfw(I,n)$ to be the set of fibrant types in context $y(I,n)$.
The universe $\LUfw$ satisfies the rules
\begin{mathpar}
  \inferrule{%
    \Lhastype{\Gamma}{a}{\LU} \\
    \Lhastype{}{\bfc}{\LCompTy{\Gamma}{\LEl{a}}}}{%
    \Lhastype{\Gamma}{\llparenthesis a, \bfc \rrparenthesis}{\LUf}}
  \and
  \inferrule{%
    \Lhastype{\Gamma}{a}{\LUf}}{%
    \Lwftype{\Gamma}{\LEl{a}}}
  \and
  \inferrule{%
    \Lhastype{\Gamma}{a}{\LUf}}{%
    \Lhastype{}{\LElComp{a}}{\LCompTy{\Gamma}{\LEl{a}}}}
\end{mathpar}
Using the glueing operation, one shows that the universe of fibrant types is \emph{itself} fibrant and,
moreover, that it is univalent.

\subsection{Adding the Later Type-Former}
\label{sec:adding-the-later-modality}

We now fix the site to be $\cube \times \omega$.
From the previous sections we know that $\totcube$ gives semantics to $\ctt$.
The new constructs of $\gdtt$ are the $\Llater$ type-former and its delayed substitutions, and guarded fixed points.
Continuing to work in the internal language, we first show that the internal language of $\totcube$ can be extended with these
constructions, allowing interpretation of the subset of the type theory $\gdtt$ without clock quantification~\cite{Bizjak-et-al:GDTT}.
Due to lack of space we omit the details of this part, but do remark that $\Llater$ is defined as
\begin{align*}
  (\Llater(X))(I,n)
  \begin{cases}
    \{\star\} & \text{if } n = 0\\
    X(I,m) & \text{if } n = m+1
  \end{cases}
\end{align*}
The essence of this definition is that $\Llater$ depends only on the ``$\omega$ component'' and ignores the ``$\cube$ component''.
Verification that all the rules of $\gdtt$ are satisfied is therefore very similar to the verification that the topos $\psh{\omega}$ is a model of the same subset of $\gdtt$.

The only additional property we need now is that $\Llater$ preserves compositions, in the sense that if we have a delayed substitution $\dsubst{\xi}{\Gamma}{\Gamma'}$ and a type $\Lwftype{\Gamma,\Gamma'}{A}$ together with a closed term $\bfc_A$ of type $\LCompTy{\Gamma,\Gamma'}{A}$ then we can construct $\bfc'_{\later[\xi]{A}}$ of type $\LCompTy{\Gamma}{\later[\xi]{A}}$.

The following lemma uses the notion of a type $\Lwftype{\Gamma}{A}$ being \emph{constant with respect to $\omega$}.
This notion is a natural generalisation to types-in-context of the property that a presheaf is in the image of the functor $\pi^*$.
We refer to the online technical appendix for the precise definition.
Here we only remark that the interval type $\I$ is constant with respect to $\omega$, as is the type $\Lwftype{\Gamma}{\Lface{\phi}}$ for any term $\Lhastype{\Gamma}{\phi}{\LFace}$.

\begin{lemma}
  \label{lem:type-iso-pi-later-constant}
  Assume $\Lwftype{\Gamma}{A}$, $\Lwftype{\Gamma,\Gamma',x : A}{B}$ and $\dsubst{\xi}{\Gamma}{\Gamma'}$, and further that $A$ is constant with respect to $\omega$.
  Then the following two types are isomorphic
  \begin{align}
    \label{eq:type-iso-pi-later-constant}
    \Gamma \vdash \later[\xi]{\Pi(x : A).B} \cong \Pi(x : A).\later[\xi]B
  \end{align}
  and the canonical morphism $\lambda f . \lambda x . \pure[\hrt{\xi,f' \gets f}]{f'\,x}$ from left to right is an isomorphism.
\end{lemma}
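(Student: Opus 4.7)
The plan is to work semantically in $\totcube$, using the explicit formula for $\later$ recalled just before the lemma: $(\later X)(I, n+1) = X(I, n)$ and $(\later X)(I, 0) = \{\star\}$. I would establish a pointwise bijection at each stage of the category of elements of $\Gamma$ and then verify naturality, from which the claimed isomorphism in the internal language follows.

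At any stage $(I, 0)$, both sides are singletons: the left because $\later$ is trivial at level $0$, the right because it consists of $A$-indexed families valued in a type whose stage-$0$ sections are singleton. The interesting case is stage $(I, n+1)$. Unwinding the delayed substitution $\xi$, an element of $\later[\xi]{\Pi(x:A).B}$ corresponds to an element of $\Pi(x:A).B$ at $(I, n)$, i.e., a natural assignment $a \in A(I', m) \mapsto b(a) \in B(I', m)$ for $(I', m)$ in the slice over $(I, n)$. Dually, an element of $\Pi(x:A).\later[\xi]B$ at $(I, n+1)$ is a natural assignment $a \in A(I', m) \mapsto b(a) \in (\later[\xi]B)(I', m)$ for $(I', m)$ in the slice over $(I, n+1)$, which is trivial when $m = 0$ and lands in $B(I', m-1)$ when $m \geq 1$. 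The hypothesis that $A$ is constant with respect to $\omega$ makes the restriction $A(I', m+1) \to A(I', m)$ an isomorphism (in fact an identity in the image of $\pi^*$), so specifying $a$ at $(I', m+1)$ is the same as specifying it at $(I', m)$; the two descriptions therefore carry the same data, which gives the desired pointwise bijection.

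It then remains to define the inverse in the internal language and to verify that it inverts the canonical morphism $\lambda f.\lambda x.\pure[\hrt{\xi, f' \gets f}]{f'\,x}$ on both sides. The main obstacle I anticipate is the bookkeeping around the delayed substitution $\xi$: because $\xi$ may bind several variables living under nested $\later$s, one must either induct on the length of $\xi$, or appeal to the equational theory in Figs.~\ref{fig:ty-eq-rules-later} and~\ref{fig:tm-eq-rules-later} to reduce to the case of empty $\xi$, in which the statement becomes the classical semantic isomorphism $\later(A \to B) \cong A \to \later B$ for $A$ constant in $\omega$. Once $\xi$ is dispatched, both round-trip identities and naturality in $\Gamma$ fall out directly from the explicit pointwise formulas and the unfolding of the semantic interpretations of $\pure$ and delayed substitutions, matching the stated canonical map.
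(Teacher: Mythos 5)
Your semantic strategy is sound and tracks the paper's development closely: the paper presents $\Llater$ via the explicit pointwise formula and defines ``constant with respect to $\omega$'' as a generalisation of being in the image of $\pi^*$, so a pointwise bijection argument over the category of elements, shifting the $\omega$-index by one and using constancy of $A$ to identify $A(I',m+1)$ with $A(I',m)$, is exactly the right move. Your stage-$(I,0)$ analysis is also correct: the slice over $(I,0)$ contains only objects at level $0$, and $\later[\xi]B$ is a singleton there, so the $\Pi$-type collapses.

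One point deserves tightening. You offer two ways to handle the delayed substitution $\xi$: semantic induction on its length, or appealing to the equational theory of Figs.~\ref{fig:ty-eq-rules-later} and~\ref{fig:tm-eq-rules-later} to reduce to empty $\xi$. The latter does not work as stated: those equations let you drop unused bindings, commute independent ones, and absorb $\pure$-applications, but a generic $\xi$ with genuinely used bindings cannot be syntactically eliminated, since $B$ really lives in the extended context $\Gamma,\Gamma',x:A$. Fortunately your main argument does not actually rely on this reduction. Semantically, $\later[\xi]$ and $\later$ have the same $\omega$-behaviour; $\xi$ only determines which element of $(\Gamma,\Gamma')$ at level $n$ one restricts to, and since $\xi$ depends only on $\Gamma$ (not on $x:A$), it commutes with the $\Pi(x:A)$ re-indexing and hence affects both sides of the isomorphism in the same way. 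So the cleanest path is to run the pointwise argument directly for general $\xi$, treating $\xi$ as a fixed substitution into the target-context part and noting it plays no role in the $\omega$-shift. With that correction, the remaining obligations you list --- constructing the explicit inverse, checking both round trips, verifying naturality, and matching the stated canonical map $\lambda f.\lambda x.\pure[\hrt{\xi,f'\gets f}]{f'\,x}$ --- are routine once the pointwise bijection is in hand, as you say.
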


\begin{corollary}
  \label{cor:later-pi-and-faces}
  If $\Lhastype{\Gamma}{\phi}{\LFace}$ then we have an isomorphism of types
  \begin{align}
    \Gamma \vdash \later[\xi]{\Pi(p : \Lface{\phi}).B} \cong \Pi(x : \Lface{\phi}).\later[\xi]{B}.
  \end{align}
\end{corollary}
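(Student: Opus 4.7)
The plan is to derive this as a direct instantiation of Lemma~\ref{lem:type-iso-pi-later-constant}. That lemma produces the desired isomorphism
\[
\Gamma \vdash \later[\xi]{\Pi(x : A).B} \;\cong\; \Pi(x : A).\later[\xi]{B}
\]
for any $A$ constant with respect to $\omega$, so the only thing to check is that taking $A \defeq \Lface{\phi}$ meets the hypothesis.

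First I would invoke the remark made just before the statement of Lemma~\ref{lem:type-iso-pi-later-constant}, namely that for any $\Lhastype{\Gamma}{\phi}{\LFace}$ the type $\Lwftype{\Gamma}{\Lface{\phi}}$ is constant with respect to $\omega$. Intuitively this is because $\Lface{\phi} = \LId{\LFace}{\phi}{\top}$ lives entirely in the $\I$/$\LFace$-fragment, which by construction is pulled back along $\pi : \cube \times \omega \to \cube$ and hence ignores the $\omega$-component; the precise definition of ``constant with respect to $\omega$'' is the one referred to in the online technical appendix. Once this is in hand, the hypotheses of Lemma~\ref{lem:type-iso-pi-later-constant} are satisfied with $A \defeq \Lface{\phi}$ (and the same $B$ and $\xi$).

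Applying the lemma then gives the stated isomorphism, together with the explicit description of the canonical map from left to right as $\lambda f.\,\lambda p.\,\pure[\hrt{\xi, f' \gets f}]{f'\, p}$, which is the evident ``push the partial-element binding under the $\later$'' construction.

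There is no real obstacle here: the work of proving that $\Llater$ commutes with dependent products over $\omega$-constant domains has already been done in Lemma~\ref{lem:type-iso-pi-later-constant}, and the mild point of verifying $\omega$-constancy for face types is handled by the preceding paragraph. The only care needed is to confirm that $\Lface{\phi}$ can appear as the domain $A$ in the lemma (it is a legitimate type of the internal language rather than a mere proposition), but this is immediate from the setup in Section~\ref{sec:faces}, where $\Lface{\phi}$ was defined precisely as a subsingleton type usable as a dependent-function domain.
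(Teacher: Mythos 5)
Your proof is correct and matches the paper's intended argument: the corollary is precisely Lemma~\ref{lem:type-iso-pi-later-constant} instantiated with $A \defeq \Lface{\phi}$, and the paper supplies the needed hypothesis (that $\Lface{\phi}$ is constant with respect to $\omega$) in the remark immediately preceding the lemma. You have also correctly reproduced the canonical isomorphism from left to right.
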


\begin{lemma}[$\later{\xi}$-types preserve compositions]
  \label{lem:later-preserves-composition}
  If $\later[\xi]{A}$ is a well-formed type in context $\Gamma$ and we have a composition term $\bfc_A : \LCompTy{\Gamma,\Gamma'}{A}$, then there is a composition term $\bfc : \LCompTy{\Gamma}{\later[\xi]{A}}$.
\end{lemma}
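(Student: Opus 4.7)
The plan is to define $\bfc$ by pushing the $\later[\xi]$ outside all the quantifiers of the composition type, applying $\bfc_A$ underneath a single $\later[\xi]$, and then reshuffling the result back. Unfolding $\LCompTy{\Gamma}{\later[\xi]{A}}$, we must construct, from a path $\gamma : \I \to \Gamma$, a face $\phi : \LFace$, a partial section $u : \Pi(i:\I).\, \Lfaceto{((\later[\xi]{A})(\gamma(i)))}{\phi}$, and a base point $a_0 : (\later[\xi]{A})(\gamma(0))[\phi \mapsto u(0)]$, an element of $(\later[\xi]{A})(\gamma(1))[\phi \mapsto u(1)]$.

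The first step is to commute the outer quantifiers past $\later[\xi]$. Since both $\I$ and $\Lface{\phi}$ are constant with respect to $\omega$, Corollary~\ref{cor:later-pi-and-faces} applied to $\Lface{\phi}$ followed by Lemma~\ref{lem:type-iso-pi-later-constant} applied to $\I$ yields an isomorphism
\[
\Pi(i:\I).\, \Lfaceto{(\later[\xi]{A(\gamma(i))})}{\phi} \;\cong\; \later[\xi]{\Pi(i:\I).\, \Lfaceto{A(\gamma(i))}{\phi}},
\]
under which $u$ corresponds to $\tilde u : \later[\xi]{\Pi(i:\I).\, \Lfaceto{A(\gamma(i))}{\phi}}$. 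An analogous chain of manipulations, unfolding $[\phi \mapsto -]$ as a $\Sigma$-type and commuting $\later[\xi]$ past the $\Sigma$ and the face-indexed identity component, repackages $a_0$ as $\tilde a_0 : \later[\xi]{A(\gamma(0))[\phi \mapsto \tilde u(0)]}$.

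Once all the composition data lives under a single $\later[\xi]$, we extend the delayed substitution by the bindings $x_u \gets \tilde u$ and $x_0 \gets \tilde a_0$ to obtain $\xi'$. Under $\later[\xi']$ the variables $x_u$ and $x_0$ are in scope as ordinary terms of the uncapped types, and the context $\Gamma'$ is available. Applying $\bfc_A$ with arguments $\gamma$, $\phi$, $x_u$, and $x_0$ produces a term of type $A(\gamma(1))[\phi \mapsto x_u(1)]$, which we pack with $\pure[\xi']{-}$ to obtain an element of $\later[\xi]{A(\gamma(1))[\phi \mapsto \tilde u(1)]}$. Reversing the commutation isomorphisms of step one then yields the required element of $(\later[\xi]{A})(\gamma(1))[\phi \mapsto u(1)]$, completing the definition of $\bfc$.

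The main obstacle is step two: verifying that the commutation of $\later[\xi]$ past the $\Sigma$ and identity components of the partial-element encoding $A[\phi \mapsto u]$ is sound and coherent, so that the $\phi$-boundary equality produced by $\bfc_A$ at the endpoint $i=1$ translates under the isomorphisms to exactly the boundary condition required of the output. All of the reductions rely only on Lemma~\ref{lem:type-iso-pi-later-constant} and Corollary~\ref{cor:later-pi-and-faces} together with the fact that face types are constant with respect to $\omega$; no additional structural property of $\later$ is needed beyond what has already been established.
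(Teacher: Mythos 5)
Your proof is essentially the paper's: introduce $\gamma$, $\phi$, $u$, $a_0$, use Corollary~\ref{cor:later-pi-and-faces} and Lemma~\ref{lem:type-iso-pi-later-constant} to produce $\tilde u$ under a single later, extend the delayed substitution, apply $\bfc_A$ under $\pure$, and verify the boundary conditions. The one presentational difference is how the boundary condition on $a_0$ is handled: you fold it into a repackaged $\tilde a_0 : \later[\xi']{A(\gamma(0))[\phi\mapsto u'(0)]}$ by also commuting $\later$ past the $\Sigma$ and the $\LIdop$ in the definition of $A[\phi\mapsto u]$, whereas the paper binds the bare $a_0$ directly and proves the inhabitation of $\later[\cdots]{\LId{}{a_0'}{u'\,0}^\phi}$ as a separate type calculation, reducing it to the given $\LId{}{a_0}{u\,0}^\phi$. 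Both moves rest on the same ingredients (constancy of $\Lface\phi$ over $\omega$, commuting $\pure$ with $\LIdop$, and the inverse construction of $\tilde u$), so they are interchangeable; the paper's route keeps the applicability of $\bfc_A$ and the correctness of the output endpoint as two explicit checks rather than hoping they are absorbed by "reversing the isomorphisms."

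One genuine gap in the general case you attempt: $\bfc_A$ has type $\LCompTy{\Gamma,\Gamma'}{A}$, so its first argument must be a path $\I \to (\Gamma,\Gamma')$, but you pass $\gamma : \I \to \Gamma$. Under $\later[\xi']$ the $\Gamma'$-variables are in scope, but they are not given as an $\I$-indexed family, so you still have to explain how $\gamma$ is extended to a path into $\Gamma,\Gamma'$ (and why the $\Gamma'$-component can be taken not to vary with $i$). The paper sidesteps this by only proving the empty-$\xi$ case in the body and deferring the general delayed substitution to the technical appendix; your sketch silently assumes the extension is unproblematic.
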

\begin{proof}
  We show the special case with an empty delayed substitution.
  For the more general proof we refer to the technical appendix.
  Assume we have a composition $\bfc_A : \LCompTy{\Gamma}{A}$.
  Our goal is to find a term $\bfc : \LCompTy{\Gamma}{\later{A}}$, so we first introduce some variables:
  \begin{mathpar}
    \gamma : \I \to \Gamma \and
    \phi : \Face \and
    u : \Pi(i:\I).\left((\later{A}){(\gamma\, i)}\right)^{\phi} \and
    a_0 : (\later{A})(\gamma\, 0)[\phi \mapsto u\, 0].
  \end{mathpar}
  Using the isomorphisms from Cor.~\ref{cor:later-pi-and-faces} and Lem.~\ref{lem:type-iso-pi-later-constant} we obtain a term $\tilde{u} : \later{(\Pi(i : \I). (A(\gamma\, i))^{\phi})}$ isomorphic to $u$.
  We can now -- almost -- write the term
  \begin{equation}
    \tag{$*$}
    \label{eq:comp-for-later}
    \pure[\vrt{u' \gets \tilde{u} \\ a_0' \gets a_0}]{%
      \bfc_A \, \gamma \, \phi \, u' \, a_0'}
    ~:~ \later{(A(\gamma \, 1))},
  \end{equation}
  what is missing is to check that $a_0' = u'\, 0$ on the extent $\phi$, so that we can legally apply $\bfc_A$; this is equivalent to saying that the type
  $
  \later[\hrt{u' \gets \tilde{u}, a_0' \gets a_0}]{\LId{A(\gamma\, 0)}{a_0'}{u'\, 0}^{\phi}}
  $
  is inhabited.
  We transform this type as follows:
  \begin{align*}
    \later[\vrt{u' \gets \tilde{u} \\ a_0' \gets a_0}]{%
    \LId{}{a_0'}{u' \, 0}^{\phi}} 
 & \cong
   \left( \later[\vrt{u' \gets \tilde{u} \\ a_0' \gets a_0}]{%
    \LId{}{a_0'}{u' \, 0}}\right)^{\phi}
    \tag{Cor.~\ref{cor:later-pi-and-faces}}  \\
 & =
   \left( \LId{}{\pure[\vrt{u' \gets \tilde{u} \\ a_0' \gets a_0}]{a_0'}}{%
    \pure[\vrt{u' \gets \tilde{u} \\ a_0' \gets a_0}]{u'\, 0}} \right)^{\phi} \\
 & =
   \left(\LId{}{a_0}{u\, 0}\right)^{\phi},
  \end{align*}
  where the last equality uses that $\tilde{u}$ is defined using the inverse of $\lambda f \lambda x . \pure[\hrt{f' \gets f}]{f' \, x}$ (Lem.~\ref{lem:type-iso-pi-later-constant}).
  By assumption it is the case that $\left(\LId{}{a_0}{u\, 0}\right)^{\phi}$ is inhabited, and therefore (\ref{eq:comp-for-later}) is well-defined.
  It remains only to check that (\ref{eq:comp-for-later}) is equal to $u \, 1$ on the extent $\phi$, but this follows from the equalities of $\bfc_A$ and by the definition of $\tilde{u}$ (Lem.~\ref{lem:type-iso-pi-later-constant}). Assuming $\phi$, we have
  \[
    \pure[\vrt{u' \gets \tilde{u} \\ a_0' \gets a_0}]{%
      \bfc_A \, \gamma \, \phi \, u' \, a_0'}
    = 
    \pure[\vrt{u' \gets \tilde{u} \\ a_0' \gets a_0}]{%
      u' \, 1}
    =
    u \, 1.
    \qedhere
  \]
\end{proof}

\subparagraph*{Summary.}%
In this section we have highlighted the key ingredients that go into a sound interpretation of $\gctt$ in $\totcube$.
For the precise statement of the interpretation of all the constructs, and the soundness theorem, we refer to the online technical appendix.

\section{Conclusion}\label{sec:conclusion-future-work}
In this paper we have made the following contributions:
\begin{itemize}
  \item
We introduce guarded cubical type theory ($\gctt$), which combines features of cubical type theory
($\ctt$) and guarded dependent type theory ($\gdtt$).
The path equality of $\ctt$ is shown to support reasoning about extensional properties of guarded recursive operations,
and we use the interval of $\ctt$ to constrain the unfolding of fixed-points.
\item
We show that $\ctt$ can be modelled in any presheaf topos with an internal non-trivial De Morgan algebra with the disjunction property, an operator $\forall$, and a universe of fibrant types.
Most of these constructions are done via the internal logic.
We then show that a
class of presheaf models of the form $\psh{\cube \times \DD}$, for any category $\DD$ with an initial object,
satisfy the above axioms and hence gives rise to a model of $\ctt$.
\item
We give semantics to $\gctt$ in the topos of presheaves over $\cube \times \omega$.
\end{itemize}

\subparagraph{Further work.}%
We wish to establish key \emph{syntactic properties} of $\gctt$, namely decidable type-checking and canonicity for base
types. Our prototype implementation establishes some confidence in these properties.

We wish to further extend $\gctt$ with \emph{clock quantification}~\cite{Atkey:Productive}, such as is present in $\gdtt$.
Clock quantification allows for the controlled elimination of the later type-former, and hence the encoding of first-class coinductive types via
guarded recursive types.
The generality of our approach to semantics in this paper should allow us to build a model by combining cubical sets with the
presheaf model of $\gdtt$ with multiple clocks~\cite{Bizjak-Mogelberg:clock-sync}.
The main challenges lie in ensuring decidable type checking ($\gdtt$ relies on certain rules involving clock quantifiers which seem difficult to implement), and solving the \emph{coherence problem} for clock substitution.

Finally, some higher inductive types, like the truncation, can be added to $\ctt$.
We would like to understand how these interact with $\later$.

\subparagraph{Related work.}%
Another type theory with a computational interpretation of functional extensionality, but without equality reflection, is
observational type theory ($\ott$)~\cite{Altenkirch:Observational}.
We found $\ctt$'s prototype implementation, its presheaf semantics, and its interval as a tool for controlling unfoldings,
most convenient for developing our combination with $\gdtt$, but extending $\ott$ similarly
would provide an interesting comparison.

Spitters~\cite{Spitters:TYPES} used the interval of the internal logic of cubical sets to
model identity types. Coquand~\cite{Cubical:internal} defined the composition operation internally
to obtain a model of type theory. We have extended both these ideas to a full model of
$\ctt$.
Recent independent work by Orton and Pitts~\cite{Orton:Axioms} axiomatises a model
for $\ctt$ without a universe, again building on Coquand~\cite{Cubical:internal}.
With the exception of the absence of the universe, their development is more general than ours.
Our semantic developments are sufficiently general to support the sound addition of guarded recursive types to $\ctt$.

\subparagraph*{Acknowledgements.}

We gratefully acknowledge our discussions with Thierry Coquand, and the comments of our reviewers.
This research was supported in part by the ModuRes Sapere Aude Advanced Grant from The Danish Council for Independent Research for the Natural Sciences (FNU).
Ale\v{s} Bizjak was supported in part by a Microsoft Research PhD grant.

\bibliography{gCTT-trimmed}

\newpage
\appendix
\section{$\term{zipWith}$ Preserves Commutativity}
\label{app:zipwith}
We provide a formalisation of Sec.~\ref{sec:streams} which can be verified by our type checker.
This file, among other examples, is available in the \texttt{gctt-examples} folder in the type-checker repository.
\lstset{%
literate=     {\\ }{{$\lambda$}}1
              {<}{{$\langle$}}1
              {>}{{$\rangle$}}1
              {|>}{{$\triangleright$}}1
              {<-}{{$\leftarrow$}}1
              {->}{{$\rightarrow$}}1
}
\lstset{
basicstyle=\footnotesize\ttfamily,
morekeywords=[1]{module, data, where},
keywordstyle=[1]{\ttfamily\color{ForestGreen}},
morekeywords=[2]{Id,StrF,Str,StrUnfoldPath,unfoldStr,foldStr,cons,head,tail,zipWithF,zipWith,zipWithUnfoldPath,comm,zipWith_preserves_comm},
keywordstyle=[2]{\ttfamily\color{MidnightBlue}},
comment=[l]{--},
commentstyle={\ttfamily\color{RedOrange}},
}
\lstinputlisting{zipWith_preserves_comm.ctt}

\end{document}